\newenvironment{proof}{{\indent  \it Proof:}}{\hfill $\blacksquare$}
\begin{document}

\title{BS Coordination Optimization in Integrated Sensing and Communication: A Stochastic Geometric View}
	
	\author{
		\IEEEauthorblockN{Kaitao Meng\IEEEauthorrefmark{1}, Christos Masouros\IEEEauthorrefmark{1},  Guangji Chen\IEEEauthorrefmark{2}, and Fan Liu\IEEEauthorrefmark{3}}
		
		\IEEEauthorblockA{\IEEEauthorrefmark{1}Department of Electronic and Electrical Engineering, University College London, UK}
		
		\IEEEauthorblockA{\IEEEauthorrefmark{2}Nanjing University of Science and Technology, China}
		
		\IEEEauthorblockA{\IEEEauthorrefmark{3}School of System Design and Intelligent Manufacturing, Southern University of Science and Technology, China}
		
		Emails: \IEEEauthorrefmark{1}\{kaitao.meng, c.masouros\}@ucl.ac.uk, \IEEEauthorrefmark{2}guangjichen@njust.edu.cn, 
		\IEEEauthorrefmark{3}liuf6@sustech.edu.cn
	}
	\maketitle

%%%%%%%%%%%%%%%%%%%%%%%%%%%%%%%%%%%%%%%%%%%%%%%%%%%%%%%%%%%%%%%%%%%%%%%%%%%%%%%%%

\begin{abstract}
In this study, we explore integrated sensing and communication (ISAC) networks to strike a more effective balance between sensing and communication (S\&C) performance at the network scale. We leverage stochastic geometry to analyze the S\&C performance, shedding light on critical cooperative dependencies of ISAC networks. According to the derived expressions of network performance, we optimize the user/target loads and the cooperative base station cluster sizes for S\&C to achieve a flexible trade-off between network-scale S\&C performance. It is observed that the optimal strategy emphasizes the full utilization of spatial resources to enhance multiplexing and diversity gain when maximizing communication ASE. In contrast, for sensing objectives, parts of spatial resources are allocated to cancel inter-cell sensing interference to maximize sensing ASE. Simulation results validate that the proposed ISAC scheme realizes a remarkable enhancement in overall S\&C network performance.
\end{abstract}   
\begin{IEEEkeywords}
	Integrated sensing and communication, multi-cell networks, stochastic geometry, interference nulling. 
\end{IEEEkeywords}
%%%%%%%%%%%%%%%%%%%%%%%%%%%%%%%%%%%%%%%%%%%%%%%%%%%%%%%%%%%%%%%%%%%%%%%%%%%%%%%%
\newtheorem{thm}{\bf Lemma}
\newtheorem{remark}{\bf Remark}
\newtheorem{Pro}{\bf Proposition}
\newtheorem{theorem}{\bf Theorem}

\section{Introduction}

Driven by the challenges of spectrum scarcity and the intricate interference between separate wireless sensing and wireless communication systems \cite{Mishra2019Toward}, integrated sensing and communication (ISAC) technique has emerged as a promising solution to provide both sensing and communication (S\&C) services in a more spectrum/cost/energy efficient way \cite{Liu2022Integrated}.
However, the existing ISAC works mainly study the performance analysis and optimization methods at the link or system levels, such as waveform optimization and resource allocation for one or a limited number of base stations (BSs) \cite{Meng2023Throughput}.

For large-scale dense ISAC networks, inter-cell interference stands out as a pivotal bottleneck, imposing substantial constraints on the network's overall performance. Fortunately, cooperative ISAC strategies present promising solutions for reducing inter-cell interference, such as coordinated beamforming and collaborative resource allocation design \cite{Shin2017CoordinatedBeamforming}. Furthermore, the network-level ISAC also introduces a new degrees of freedom (DoF) for balancing S\&C performance, e.g., the optimization of cooperative BS cluster sizes and the management of average user/target load. However, achieving a comprehensive quantitative analysis and optimization of the cooperative ISAC network performance for remains a formidable challenge.

Stochastic geometry (SG) is a powerful mathematical tool widely used to analyze multi-cell wireless communication networks \cite{Andrews2011TractableApproach}. Beyond its utility in communication networks, SG can also be applied in performance analysis for target sensing within vehicular radar networks and wireless sensor networks \cite{al2017stochastic}. More recently, SG techniques have been used to analyze the ISAC network's performance. For instance, in \cite{Olson2022RethinkingCells}, a mathematical framework is developed to characterize S\&C coverage probability and ergodic capacity in a mmWave ISAC network. However, it is important to note that these ISAC studies typically provide services for only one target and one user per cell within a single frame, failing to fully exploit the multiplexing potential of spatial resources. Additionally, these studies seldom consider BS cooperation in mitigating inter-cell interference to enhance ISAC network performance.

Building upon the above discussions, we propose a cooperative ISAC scheme and derive a tractable expression for area spectral efficiency (ASE), thereby unveiling critical cooperative dependencies on ISAC networks. Then, we maximize the S\&C performance boundary by jointly optimizing cooperative BS cluster sizes for S\&C and the user/target loads. Remarkably, when striving for communication ASE maximization, the optimal trade-off leans toward maximizing spatial resource utilization for multiplexing and diversity gains, without employing interference nulling. In contrast, for sensing objectives, spatial resources are partially allocated for interference elimination to maximize sensing ASE. The primary contributions of this paper are summarized as follows:
\begin{itemize}[leftmargin=*]
	\item First, we propose a collaborative ISAC network that leverages interference nulling through coordinated beamforming techniques. We verify that interference nulling significantly enhances both average S\&C communication performance.
	\item Second, we demonstrate that the distribution of sensing interference distances exhibits a hole region. To handle this issue, we propose a geometry-based approach to accurately derive a tractable expression for radar information rate.
	\item Finally, the performance boundary of the proposed ISAC networks is compared with an inner bound to verify that coordinated beamforming can achieve larger cooperation gain and a more flexible tradeoff between S\&C compared to the benchmark scheme.
\end{itemize}

Notation: $B(a,b,c) = \int_0^a t^(b-1) (1-t)^{c-1}dt$ is the incomplete Beta function. ${\cal{O}}(0,r)$ denotes the circle region with center at the origin and radius $r$.

\section{System Model}

\subsection{Cooperative ISAC Networks}

In this work, we present a coordinated beamforming strategy to achieve interference nullification within ISAC networks. Each BS is equipped with $M_{\mathrm{t}}$ transmit antennas and $M_{\mathrm{r}}$ received antennas, and the BS locations adhere to a homogeneous Poisson point process (PPP), denoted by $\Phi_b = \{ {\bf{x}}_i \in \mathbb{R}^2, \forall i \in \mathbb{N}^+\}$, where $ {\bf{x}}_i$ is the position of BS $i$. Similarly, $\Phi_u$ and $\Phi_t$ are the point processes for the two-dimensional (2D) locations of single-antenna communication users and targets within ISAC networks. We assume that $\Phi_b$, $\Phi_u$, and $\Phi_t$ are mutually independent PPPs characterized by intensities $\lambda_b$, $\lambda_u$, and $\lambda_s$, where $\lambda_u, \lambda_s \gg \lambda_b$ \cite{Marco2016Stochastic}. 

By considering one time-frequency resource block, each BS sends independent data to $K$ users while simultaneously sensing $J$ targets with unified ISAC signals \cite{Liu2022Integrated}. As shown in Fig.~\ref{figure1}, to independently mitigate S\&C interference, we employ a dynamic clustering approach to group $L$ BSs for communication interference nulling and group $Q$ BSs for sensing interference nulling. To fully leverage the integration benefits at the network level, we further jointly optimize the number of users and targets to be served, as well as the size of cooperative BS cluster sizes for S\&C according to our derived tractable expressions for S\&C performance metrics.

\subsection{Communication model}

Following a general and widely accepted framework in stochastic geometry, we assume the typical user as user $k$ located at the origin. We analyze the performance of this typical user to represent the average performance of all users \cite{Bjrnson2016Deploying}. This typical user is served by its closest BS (referred to as the serving BS), identified as index 1. The large-scale pathloss fading from the user to the serving BS is modeled as $\left\|\mathbf{x}_1\right\|^{-\alpha}$, where $\mathbf{x}_1$ represents the location of the serving BS and $\alpha \ge 2$ denotes the pathloss exponent. As a result, the received signal at the typical user $k$ can be given by
\begin{equation}
	\begin{aligned}
		y_{c,k}=& \underbrace{\left\|\mathbf{x}_1\right\|^{-\frac{\alpha}{2}} \mathbf{h}_{k, 1}^H \mathbf{F}_1 \mathbf{s}_1}_{\text{intended signal}} +\underbrace{\sum\nolimits_{i=2}^{L}\left\|\mathbf{x}_i\right\|^{-\frac{\alpha}{2}} \mathbf{h}_{k, i}^H \mathbf{F}_i \mathbf{s}_i}_{\text{intra-cluster interference}}\\
		&+\underbrace{\sum\nolimits_{i=L+1}^{\infty}\left\|\mathbf{x}_i\right\|^{-\frac{\alpha}{2}} \mathbf{h}_{k, i}^H \mathbf{F}_i \mathbf{s}_i}_{\text{inter-cluster interference}} + \underbrace{n_{k,c}}_{\text{noise}},
	\end{aligned}
\end{equation}
where $\mathbf{h}^H_{k, i} \sim \mathcal{C N}\left(0, \mathbf{I}_{M_{\mathrm{t}}}\right)$ is the channel vector from BS at $\mathbf{x}_i$ to user $k$, $\mathbf{F}_i=\left[\mathbf{f}_1^i, \ldots, \mathbf{f}_K^i\right] \in C^{M_{\mathrm{t}} \times K}$ denotes the  precoding matrix of the corresponding BS, and $\mathbf{s}_i=\left[s_1^i, \ldots, s_K^i\right]^T$ is the information symbol vector transmitted by this BS. We assume $\mathrm{E}\left[\mathbf{s}_i \mathbf{s}_i^H\right]=\frac{P_{\mathrm{t}}}{K} \mathbf{I}_K$ due to equal power allocation across the BSs, where $P_{\mathrm{t}}$ is the transmission power of BSs.

Considering the substantial interference in dense cell scenarios, this work focuses on an interference-limited network by ignoring the noise \cite{Park2016OptimalFeedback}. We employ zero-forcing (ZF) beamforming to nullify interference within the cooperative S\&C clusters and optimize the desired signal strength for all $K$ users in the cluster. Then, the signal-to-interference ratio (SIR) is given as follows: 
\begin{equation}\label{CommunicationSIR}
	{\rm{SIR}}_c = \frac{{g_{k,1}^k{{\left\| {{{\bf{x}}_1}} \right\|}^{ - \alpha }}}}{{\sum\nolimits_{i = L+1}^\infty  {{g_{k,i}}} {{\left\| {{{\bf{x}}_i}} \right\|}^{ - \alpha }}}},
\end{equation}
where $g_{k,1}^k=\left|\mathbf{h}_{k, 1}^H \mathbf{f}_k^1\right|^2$ represents the effective channel gain of desired signals, $\sum\nolimits_{i=L+1}^{\infty} g_{k,i}\left\|\mathbf{x}_i\right\|^{-\alpha}$ denotes the remaining inter-cluster interference, and $g_{k,i} = \sum\nolimits_{j=1}^{K} \left|\mathbf{h}_{k, i}^H \mathbf{f}_j^i\right|^2$. In this study, we employ ASE as the metric to evaluate network-level communication performance. The mathematical expression for communication ASE is then provided as follows:
\begin{equation}\label{ASEcommunication}
	T^{\rm{ASE}}_{c}=\lambda_b K R_c,
\end{equation}
where $R_c=\mathrm{E}[\log (1+\mathrm{SIR}_c)]$ is the average data rate of users.

\begin{figure}[t]
	\centering
	\includegraphics[width=8.4cm]{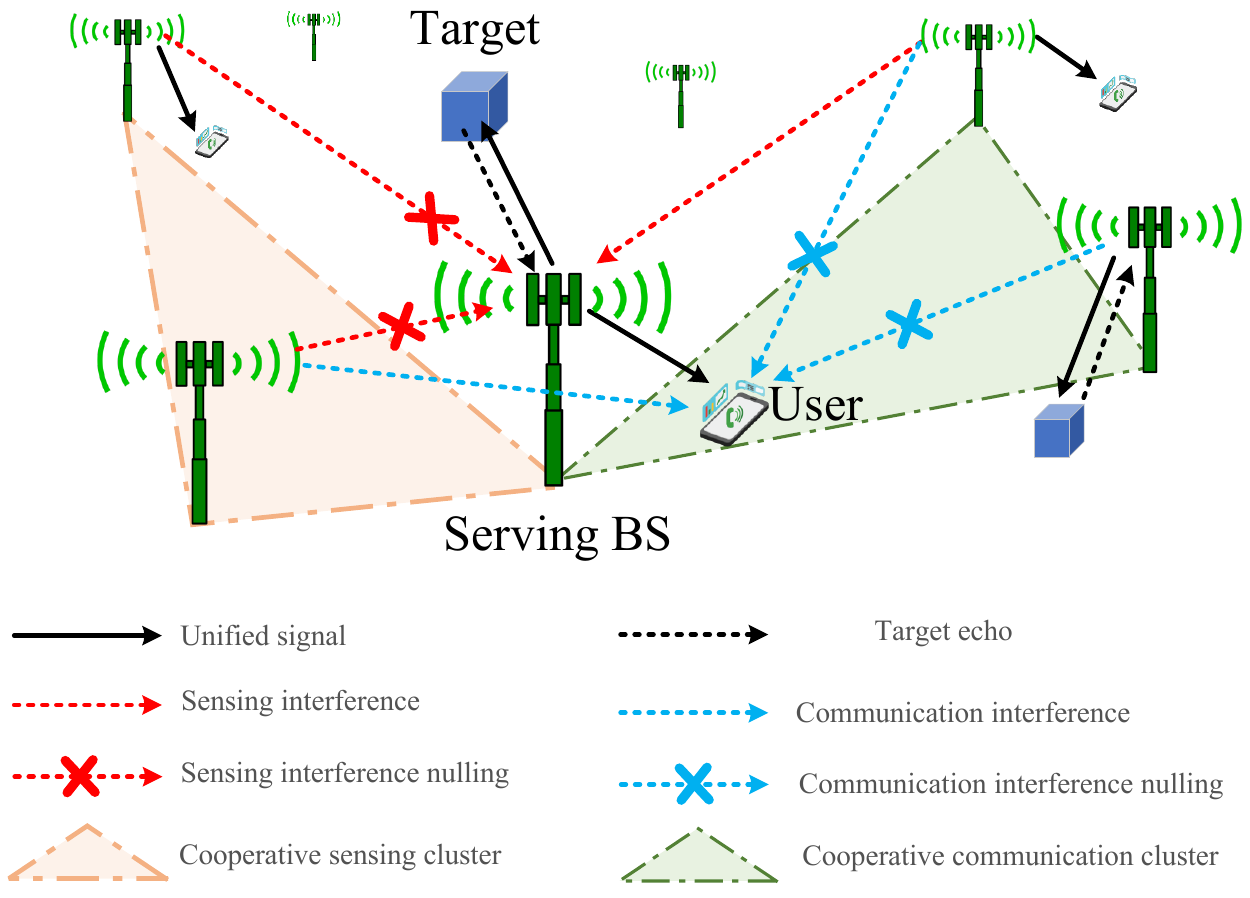}
	\caption{Illustration of cooperative ISAC networks with separate interference nulling for S\&C.}
	\label{figure1}
\end{figure}

\subsection{Sensing Model}

Similarly, we identify the typical target as target $j$ located at the origin, which is sensed by its nearest BS, denoted by ${\bf{x}}_1$. For ease notation, we use the same BS index notation to represent the serving BS's location of S\&C. The large-scale pathloss fading from the target to the serving BS is modeled as $\left|\mathbf{x}_1\right|^{-2\beta}$, where  $\beta$ is the pathloss exponent. In this work, we employ the maximum-ratio combining (MRC) receive filter because it strikes a practical balance between performance and tractability.

To achieve interference nullification in a resource-efficient manner, the transmit beamforming at BS $q$ within the cooperative cluster is designed by considering the interference channel from BS $q$ to the serving BS (referred to as ${{\bf{G}}_{q,1}^H}$) together with the receive filter ${\bf{v}}_j^H (\theta_j) = [1, \cdots, e^{ -{j \pi(M_{\mathrm{r}}-1) \cos(\theta_j) }}]^T$, where $\theta_j$ denotes the direction of target $j$. This involves the equivalent channel ${\bf{v}}_j^H(\theta_j) {\bf{G}}_{q,1}^H$. To streamline the notation, let ${\bf{h}}_{q,1}^H(\theta_j) = {\bf{v}}_j^H (\theta_j) {\bf{G}}_{q,1}^H$. Following the receive filtering process for target $j$, the resulting received signal at the serving BS is expressed as
\begin{align}\label{TargetEchoSignal}
	y_{s,j} =& {\bf{v}}_j^H(\theta_j)\underbrace { {{{\left\| {{{\bf{x}}_1}} \right\|}^{-\beta}}} {\bf{b}}(\theta_j ){{\bf{a}}^H}(\theta_j )}_{{\text{target round-trip channel}}}\mathbf{F}_1 \mathbf{s}_1(t - 2{\tau_{j,1}}) \nonumber \\
	&+ \underbrace {\sum\nolimits_{q = 2}^Q { {\left\| {{{\bf{x}}_q} - {{\bf{x}}_1}} \right\|^{-\frac{\alpha}{2}}} {\bf{h}}_{q,1}^H \mathbf{F}_q\mathbf{s}_q  (t - {\tau _{q,1}})} }_{{\text{intra-cluster interference}}} \nonumber \\
	&+ \underbrace{ \sum\nolimits_{{q} = Q+1}^{\infty} { {\left\| {{{\bf{x}}_q} - {{\bf{x}}_1}} \right\|}^{-\frac{\alpha}{2}}  {{\bf{h}}_{q,1}^H} \mathbf{F}_q \mathbf{s}_q(t - {\tau _{q,1}})} }_{\text{inter-cluster interference}} \nonumber \\
	&+ \underbrace {{\bf{v}}_j^H{{\bf{H}}_1}\mathbf{F}_1 \mathbf{s}_1(t - {\tau_{0}})}_{{\text{self-interference}}} + \underbrace {{\bf{v}}_j^H {\bf{n}}_{s}}_{\text{{{noise}}}},
\end{align}
where ${{\bf{a}}^H}(\theta_j ) = [1, \cdots, e^{ {j \pi(M_{\mathrm{t}}-1)  \cos(\theta_j) }}]^T$, ${\bf{b}}(\theta_j ) = [1, \cdots, e^{ {j \pi(M_{\mathrm{r}}-1)  \cos(\theta_j) }}]$, $\left\| {{{\bf{x}}_q} - {{\bf{x}}_1}} \right\|$ represents the distance from the interfering BSs at ${{\bf{x}}_q}$ to the serving BS, and ${{\bf{H}}_1}$ denotes the self-interference channel at the serving BS, which is assumed to be cancelled. In (\ref{TargetEchoSignal}), ${\tau_{j,1}}$, ${\tau _{q,1}}$, and $\tau_{0}$ represent the transmission delay of target-serving BS link, BS $q$-serving BS link, and serving interference link, respectively. 

To null the interference to the serving BS for sensing target $j$, the design of transmit beamforming at BS $q$ necessitates the avoidance of interference towards the equivalent channel ${\bf{h}}_{q,1}^H(\theta_j)$. It is assumed that the receive filtering achieves a performance enhancement of $M_{\mathrm{r}}$ through perfect alignment with the target channel. Consequently, with a matched filter applied over the symbol domain, the signal-to-interference ratio (SIR) of echo signals reflected from target $j$ can be expressed as
\begin{equation}\label{SensingSIR}
	{\rm{SIR}}_s = \xi \Delta T M_{\mathrm{r}} \frac{{h_{j,1}^t{{\left\| {{{\bf{x}}_1}} \right\|}^{ - 2\beta }}}}{{\sum\nolimits_{q = Q+1}^\infty  {{h_{q,1}}} {{\left\| {{{\bf{x}}_q} - {{\bf{x}}_1}} \right\|}^{ - \alpha }}}},
\end{equation}
where $h_{j, 1}^t=\sum\nolimits_{k = 1}^K \left|{{\bf{a}}^H}(\theta_j ) \mathbf{f}^1_k\right|^2$ represents the effective signal channel gain from the serving BS towards the target's direction, ${\sum\nolimits_{q = Q+1}^\infty  {{h_{q,1}}} {{\left\| {{{\bf{x}}_q} - {{\bf{x}}_1}} \right\|}^{ - \alpha }}}$ denotes the inter-cluster interference, and ${{h_{q,1}}} = \sum\nolimits_{k = 1}^K \left|{\bf{h}}_{q,1}^H(\theta_j) {\mathbf{f}}^q_k\right|^2$. In (\ref{SensingSIR}), $\Delta T$ signifies the gain achieved by the matching filter, while $\xi$ corresponds to the radar cross-section (RCS) of the target, typically estimated based on prior information. For conventional measurement algorithms such as MUSIC and Capon \cite{Stoica1990Maximum}, the maximum number of distinguishable targets, denoted as $J_{\max}$, is constrained by the number of receive antennas and the processing time demands.

In the literature, there is an implicit understanding that a higher information rate between the target impulse response and the measurement corresponds to improved radar capability for accurately estimating target parameters \cite{Yang2007MIMO}. Hence, the radar information rate serves as a useful metric for evaluating the accuracy of system parameter estimation. As a result, we introduce the concept of sensing ASE to comprehensively characterize the network-level performance of ISAC. The mathematical expression for sensing ASE is given by

\begin{equation}\label{ASEsensing}
	T^{\rm{ASE}}_{s}=\lambda_b J R_s,
\end{equation}
where $R_s=\mathrm{E}[\log (1+\mathrm{SIR}_s)]$ is the target' average radar information rate.

\section{Communication Performance Analysis}
\label{CommunicationPerformance}
In this section, our objective is to analytically describe the communication rate using SG tools. As outlined in Lemma 1 of \cite{hamdi2010useful}, it is established that for uncorrelated variables $X$ and $Y$, the following equation holds
\begin{equation}
	\begin{aligned}
		&{\rm{E}}\left[ {\log \left( {1 + \frac{X}{Y}} \right)} \right] \\
		=& \int_0^\infty  {\frac{1}{z}} \left( {1 - {\rm{E}}\left[{e^{ - z \left[ X\right] }}\right]} \right){\rm{E}}\left[{e^{ - z\left[ Y \right]}}\right]dz.
	\end{aligned}
\end{equation}
Then, with a given distance $r$ from the typical user to the serving BS, the conditional expectation can be expressed as
\begin{equation}
		{\rm{E}}\left[ {\log \left( {1 + \mathrm{SIR}_c} \right)} \big| r \right] = \int_0^\infty  {\frac{{1 - {\rm{E}}\left[ {{e^{ - zg_{k,1}^k}}} \right]}}{z}} {\rm{E}}\left[ {{e^{ - z I_{\rm{C}}}}} \right]dz,
\end{equation}
where $ I_{\rm{C}} = \sum\nolimits_{{{i}} = L+1}^\infty  {{g_{k,i}}} {{\left\| {{{\bf{x}}_{{i}}}} \right\|}^{ - \alpha }}{r^\alpha }$.
As define in (\ref{CommunicationSIR}), $g_{k, 1}^k$ is the effective desired signal channel gain, $g_{k, 1}^k \sim \Gamma \left( M_{\mathrm{t}} - KL - J(Q-1) +1,1\right)$ \cite{Hosseini2016Stochastic}, and $g_{k,i} \sim \Gamma \left(K,1\right)$. Based on the above discussion, the useful signal power can be expressed as
\begin{equation}\label{UsefulSignalPower}
	\begin{aligned}
		{\rm{E}}\left[ {{e^{ - zg_{k,1}^k}}} \right] &\simeq \int_0^\infty  {\frac{{{e^{ - zx}}{x^{{M_{\mathrm{t}} - KL - J(Q-1) }}}{e^x}}}{{\Gamma \left( {M_{\mathrm{t}} - KL - J(Q-1) + 1} \right)}}} {\rm{d}}x \\
		&= {\left( {1 + z} \right)^{ - ({M_{\mathrm{t}} - KL - J(Q-1) + 1})}}.
	\end{aligned}
\end{equation}
Then, the original expression for $R_c$ is derived in Theorem \ref{CommunicationTightExpression}.

\begin{theorem}\label{CommunicationTightExpression}
The communication performance can be given by
\begin{equation}\label{TightCommunicationExpression}
	\begin{aligned}
		R_c =& \int_0^\infty  {\frac{{1 - {{\left( {1 + z} \right)}^{ - \left( {M_{\mathrm{t}} - LK - J(Q-1) + 1} \right)}}}}{z}} \\
		& \int_0^1 {\frac{{2\left( {L - 1} \right)\eta_L {{\left( {1 - {\eta_L ^2}} \right)}^{L - 2}}}}{ {\rm{H}}\left( {z,K,\alpha ,\eta_L } \right) + 1 }d\eta_L }dz,
	\end{aligned}
\end{equation}
where ${\rm{H}}\left( {z,K,\alpha ,\eta_L } \right) =K{z^{\frac{2}{\alpha }}}B\left( {\frac{z}{{z + {\eta_L ^{ - \alpha }}}},1 - \frac{2}{\alpha },K + \frac{2}{\alpha }} \right) + \frac{1}{{{\eta_L ^2}}}\left( {\frac{1}{{{{\left( {1 + z{\eta_L ^\alpha }} \right)}^K}}} - 1} \right) $.
\end{theorem}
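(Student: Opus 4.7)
The plan is to complete the Hamdi-identity derivation sketched just above the statement. Since the signal factor $\mathrm{E}[e^{-zg_{k,1}^k}]$ is already provided in (\ref{UsefulSignalPower}), the remaining task is to obtain a closed-form expression for the interference Laplace transform $\mathcal{L}_{I_\mathrm{C}}(z) = \mathrm{E}[e^{-zI_\mathrm{C}}]$ with $I_\mathrm{C} = r_1^\alpha \sum_{i \geq L+1} g_{k,i}\|\mathbf{x}_i\|^{-\alpha}$. I would first condition on the distance $r_L = \|\mathbf{x}_L\|$ to the $L$-th nearest BS (the boundary of the cooperative cluster) and on the ratio $\eta_L = r_1/r_L \in (0,1]$. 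By Slivnyak's theorem, the non-cluster BSs form a homogeneous PPP of intensity $\lambda_b$ on the exterior of $\mathcal{O}(0, r_L)$. Because $g_{k,i} \sim \Gamma(K,1)$ has moment-generating function $(1+s)^{-K}$, the PGFL of the PPP together with the substitution $u = r/r_L$ give
\begin{equation*}
\mathcal{L}_{I_\mathrm{C}}(z \mid r_L, \eta_L) = \exp\!\bigl(-2\pi\lambda_b r_L^2\, I(z\eta_L^\alpha)\bigr),
\end{equation*}
where $I(\tilde s) = \int_1^\infty \bigl(1 - (1 + \tilde s u^{-\alpha})^{-K}\bigr) u\,du$.

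Next, I would evaluate $I(\tilde s)$ explicitly. Integration by parts with $w = 1 - (1 + \tilde s u^{-\alpha})^{-K}$ and $dv = u\,du$ yields a boundary contribution $\tfrac{1}{2}((1+\tilde s)^{-K} - 1)$; the successive substitutions $v = \tilde s u^{-\alpha}$ and then $t = v/(1+v)$ map the remaining integral onto the canonical Beta integrand $t^{(1-2/\alpha)-1}(1-t)^{(K+2/\alpha)-1}$ on $[0, \tilde s/(1+\tilde s)]$, producing
\begin{equation*}
2I(\tilde s) = (1+\tilde s)^{-K} - 1 + K\tilde s^{2/\alpha}\, B\!\left(\tfrac{\tilde s}{1+\tilde s},\, 1 - \tfrac{2}{\alpha},\, K + \tfrac{2}{\alpha}\right).
\end{equation*}
Setting $\tilde s = z\eta_L^\alpha$ and using $\tilde s^{2/\alpha} = z^{2/\alpha}\eta_L^2$ and $\tilde s/(1+\tilde s) = z/(z + \eta_L^{-\alpha})$, one recognises $2I(z\eta_L^\alpha)/\eta_L^2$ as precisely $H(z, K, \alpha, \eta_L)$.

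Finally, I would average over the radial geometry. Writing $\mathbf{x}_1 = \eta_L \mathbf{x}_L$, a classical PPP order-statistic argument shows that $\eta_L^2$ is the minimum of $L-1$ i.i.d.\ $\mathrm{Uniform}(0,1)$ variables, so $\eta_L$ has density $2(L-1)\eta_L(1-\eta_L^2)^{L-2}$. Using $r_L^2 = r_1^2/\eta_L^2$ to rewrite the exponent as $2\pi\lambda_b r_L^2 I(z\eta_L^\alpha) = \pi\lambda_b r_1^2 H(z, K, \alpha, \eta_L)$ and integrating over $r_1$ against the nearest-BS density $2\pi\lambda_b r_1 e^{-\pi\lambda_b r_1^2}$ produces the factor $(1 + H)^{-1}$. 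Combining this with the density of $\eta_L$ and with the signal factor from (\ref{UsefulSignalPower}) yields (\ref{TightCommunicationExpression}).

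The main technical obstacle is the second step: the nested substitutions must be executed carefully in order to recognise the Beta integrand with the correct upper limit $\tilde s/(1+\tilde s)$ and prefactor $\tilde s^{2/\alpha}$, and to cancel all spurious factors against the boundary term. A more subtle conceptual point appears in the radial averaging: the definition of $H$ has been tuned precisely so that the $r_1$-integration against the nearest-BS density produces the clean factor $(1+H)^{-1}$; averaging against the density of $r_L$ directly would instead yield a $(1+\,\cdot\,)^{-L}$ factor, so the choice of radial variable must be aligned with the shape of $H$ in order to recover the stated expression.
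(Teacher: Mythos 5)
Your proposal is correct and follows essentially the same route as the paper's Appendix~A: the Hamdi identity, the PGFL of the PPP outside the disc of radius $r_L$ with the $\Gamma(K,1)$ MGF, the reduction of the resulting integral to the incomplete Beta form of ${\rm H}$, the ratio density $f_{\eta_L}(x)=2(L-1)x(1-x^2)^{L-2}$, and the marginal integration over $r_1$ against the nearest-BS Rayleigh density to produce the factor $(1+{\rm H})^{-1}$. The only difference is expository: you spell out the integration by parts and the substitutions that the paper compresses into steps $(b)$--$(c)$, and you explicitly flag the choice of $r_1$ (rather than $r_L$) as the radial averaging variable, which the paper uses implicitly.
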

\begin{proof}
	Please refer to Appendix A in \cite{Myproof}. 
\end{proof}

Based on (\ref{TightCommunicationExpression}), it's evident that the average data rate remains unaffected by the BS density $\lambda_b$, i.e., the communication ASE ($T^{\rm{ASE}}_c$) increases linearly with the BS density. Furthermore, the communication performance decreases monotonically as the value of $J(Q-1)$ increases.

\section{Sensing Performance Analysis}
\label{SensingPerformance}
First, it can be readily proved that the effective signal channel gain at the target's direction and sensing interference channel gain between BSs can be approximated as gamma random variables, i.e., $h^t_{j, 1}$ and $h_{i, 1} \sim \Gamma(K,1)$.
Then, under a given distance $R$ from the serving BS to the typical target, we can derive the conditional radar information rate expectation as follows:
\begin{equation}\label{RadarRateExpression}
	\begin{aligned}
		R_s \!=& {\rm{E}}\!\left[ {\log \!\left( \! {1 +  \frac{\xi \Delta T M_{\mathrm{r}} {h_{j,1}^t}}{{\sum\nolimits_{q = Q+1}^\infty  {{h_{q,1}}} {{\left\|  {{{\bf{x}}_q} - {{\bf{x}}_1}} \right\|}^{ - \alpha }} {{R}^{  2 \beta }}}}} \right)}  \! \bigg| \left\|{{\bf{x}}_1}  \right\| \!= \!R \right] \\
		=& \int_0^\infty  {\frac{{1 - {\rm{E}}\left[ {{e^{ - z \xi \Delta T M_{\mathrm{r}} h_{j,1}^t}}} \right]}}{z}} {\rm{E}}\left[ {{e^{ - z I_{\rm{S}}}}} \right]dz,
	\end{aligned}
\end{equation}
where $I_{\rm{S}} = \sum\nolimits_{q = Q+1}^\infty  {{h_{q,1}}} {{\left\|  {{{\bf{x}}_q} - {{\bf{x}}_1}} \right\|}^{ - \alpha }} {{R}^{  2 \beta }}$.
According to the analysis of the distribution of effective signals and interference signals, we have ${\rm{E}}\left[ {{e^{ - z \xi \Delta T M_{\mathrm{r}} h_{j,1}^t}}} \right] = {{{\left( {1 + \xi \Delta T M_{\mathrm{r}} z} \right)}^{ - K}}}$. Obtaining the radar information rate expression is challenging because of the unique probability density function (PDF) that characterizes the distance between interfering BSs and the serving BS. Specifically, when considering that the typical target is sensed by its closest BS, i.e., the serving BS, it becomes impossible to find another BS within the circular region defined by the target as its center and a radius equivalent to the distance to the serving BS. This phenomenon is illustrated by the gray area in Fig.~\ref{figure3}, which is referred to as the interference hole in the following discussion. 

To address the aforementioned challenge, we model the precise PDF related to the distance of interfering BSs from a geometric perspective as follows. 
\begin{Pro}\label{LaplaceTransformSensing}
	When $Q = 1$, the average radar information rate $R_s$ can be expressed as
	\begin{equation}\label{SensingRateExpression}
		R_s = \int_0^\infty  {\frac{{1 - {{{\left( {1 + \xi \Delta T M_{\mathrm{r}}z} \right)}^{ - K}}}}}{z}} \int_0^\infty {{\cal L}_{{I_{\rm{S}}}}}(z) f(R) dR dz,
	\end{equation}
	where $f(R) = 2\pi {\lambda _b}R{e^{ - \pi {\lambda _b}{R^2}}}$ and ${{\cal L}_{{I_{\rm{S}}}}}(z) \! = \!  \exp \bigg( \! - R \!\bigg( K{z^{\frac{2}{\beta }}}{{\left( {\frac{R}{{\pi \lambda_b }}} \right)}^{\frac{{2\alpha }}{\beta } - 1}} \!B\left( {1,1 - \frac{2}{\beta },K + \frac{2}{\beta }} \right) \!+ 1
	- \! {\int_0^2 \! {\frac{2}{\pi }\arccos  {\frac{t}{2}} \bigg( {1 - {{{{\bigg(\! {1 \!+\! z{{\bigg( {\frac{R}{{\pi \lambda_b }}} \bigg)}^{\alpha  - \beta /2}}\!{t^{ - \beta }}} \bigg)}^{-K}}}}} \! \bigg)} tdt} \bigg) \! \bigg)$.
\end{Pro}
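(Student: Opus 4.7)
The plan is to start from the conditional expression in (\ref{RadarRateExpression}) and compute each Laplace-transform ingredient in turn. Since $h_{j,1}^t \sim \Gamma(K,1)$, the moment generating function $\mathbb{E}[e^{-z\xi\Delta T M_{\mathrm{r}} h_{j,1}^t}] = (1+\xi\Delta T M_{\mathrm{r}} z)^{-K}$ is immediate, and the distance $R$ from the typical target to its nearest BS admits the standard nearest-neighbour density $f(R)=2\pi\lambda_b R e^{-\pi\lambda_b R^2}$ from the void probability of $\Phi_b$. These two facts supply the outer integration structure of (\ref{SensingRateExpression}); the remaining work is to evaluate $\mathcal{L}_{I_{\mathrm{S}}}(z)$ conditional on $R$ while respecting the interference hole.

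The key observation is that, conditional on the serving BS sitting at $\mathbf{x}_1$ with $\|\mathbf{x}_1\|=R$, the nearest-neighbour constraint forces no BS to lie inside $\mathcal{O}(0,R)$, so by the standard conditioning property of the PPP the interferers form a Poisson process of intensity $\lambda_b$ on $\mathbb{R}^2\setminus\mathcal{O}(0,R)$. Since the $h_{q,1}$ are i.i.d.\ $\Gamma(K,1)$ with Laplace transform $(1+zs)^{-K}$, the probability generating functional of the PPP yields
\begin{equation*}
\mathcal{L}_{I_{\mathrm{S}}}(z) = \exp\!\Big(\!-\lambda_b\!\int_{\mathbb{R}^2\setminus\mathcal{O}(0,R)}\!\!\big(1-(1+zR^{2\beta}\|\mathbf{y}-\mathbf{x}_1\|^{-\alpha})^{-K}\big)\, d\mathbf{y}\Big).
\end{equation*}

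I would then re-centre to polar coordinates $(\rho,\phi)$ around the serving BS, with $\phi$ measured from the direction pointing towards the target. A short geometric computation shows that the excluded disk $\mathcal{O}(0,R)$, viewed from the serving BS which sits on its boundary, coincides with $\{(\rho,\phi):\rho\le 2R\cos\phi,\ |\phi|<\pi/2\}$. I would then split the PGFL integral as ``full plane minus the hole''. The full-plane piece is a classical stochastic-geometry calculation which, after the substitution $u=zR^{2\beta}\rho^{-\alpha}$, collapses to an incomplete Beta expression of the form $Kz^{2/\beta} B(1,1-2/\beta,K+2/\beta)$, matching the first bracketed term in the exponent of $\mathcal{L}_{I_{\mathrm{S}}}(z)$ up to the normalization $R/(\pi\lambda_b)$ appearing in the statement.

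The hole-correction term is obtained by swapping the order of integration: for fixed $\rho\in[0,2R]$, the set $\{\phi:\rho\le 2R\cos\phi\}$ has Lebesgue measure $2\arccos(\rho/(2R))$, so the angular integration collapses to that factor and leaves a one-dimensional integral in $\rho$. Rescaling $t=\rho/R$ produces precisely the $\int_0^2 \tfrac{2}{\pi}\arccos(t/2)\big[1-(1+z(\cdot)t^{-\beta})^{-K}\big]t\,dt$ kernel in the proposition. Combining both contributions gives the stated form of $\mathcal{L}_{I_{\mathrm{S}}}(z)$, and substituting into the conditional rate formula yields (\ref{SensingRateExpression}). The main obstacle is the geometric book-keeping: correctly accounting for the void around the target (not around the serving BS) when applying the PPP conditioning, and then decomposing the hole cleanly so that the $\arccos$ kernel emerges with the correct powers of $R$ under the dual path-loss exponents ($2\beta$ for the target round-trip versus $\alpha$ for BS-to-BS).
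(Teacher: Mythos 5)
Your proposal follows essentially the same route as the paper's Appendix B: condition on the nearest-BS distance $R$, apply the PGFL of the PPP restricted to the complement of the interference hole $\mathcal{O}(0,R)$, decompose the exponent as a full-plane incomplete-Beta term minus a hole correction whose angular measure at radius $\rho$ is $2\arccos\left(\rho/(2R)\right)$, rescale $t=\rho/R$, and finally integrate against $f(R)$. The geometric book-keeping you flag as the main obstacle is exactly the content of the paper's strip-area argument, so the two proofs coincide in substance.
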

\begin{proof}
	Please refer to Appendix B in \cite{Myproof}.
\end{proof}

\begin{figure}[t]
	\centering
	\includegraphics[width=7.5cm]{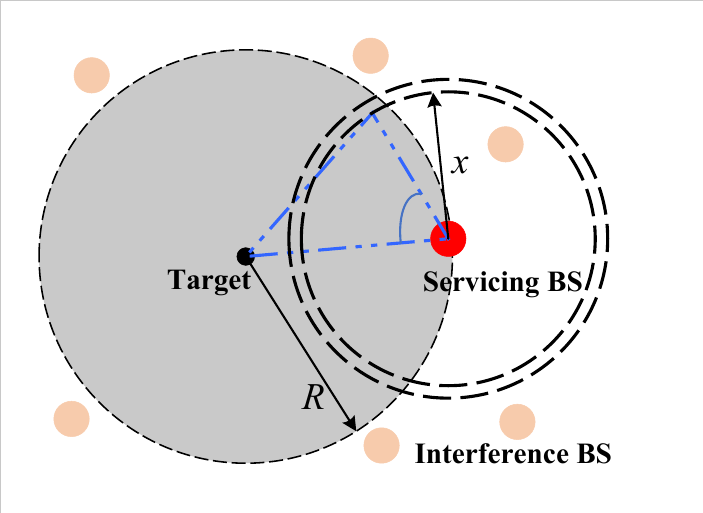}
	\vspace{-3.5mm}
	\caption{Illustration of sensing interference hole.}
	\label{figure3}
\end{figure}

Based on Proposition \ref{LaplaceTransformSensing}, BS density generally affects the average sensing performance, primarily stemming from variations in pathloss coefficients between effective signals and interference. When $\alpha = 2 \beta$, we have 
\begin{equation}
	T^{\rm{ASE}}_s = \lambda_b J \int_0^\infty  {\frac{{1 - {{\left( {1 + \xi \Delta T M_{\mathrm{r}}z} \right)}^{ - K}}}}{z {\rm{I}}(z,K,\alpha)}dz},
\end{equation}
where ${\rm{I}}(z,K,\alpha) = Kz^{\frac{1}{\alpha }}B\left( {1,1 - \frac{1}{\alpha },K + \frac{1}{\alpha }} \right)  - \int_0^2 {\frac{2}{\pi }\arccos \left( {\frac{t}{2}} \right)\left( {1 - \frac{1}{{{{\left( {1 + z{t^{ - 2\alpha }}} \right)}^K}}}} \right)} tdt + 1$. In this case, the sensing ASE $T^{\rm{ASE}}_s$ increases linearly with the BS density. 

Nonetheless, when $Q \ge 2$, deriving a tractable expression for the radar information rate is challenging due to the complicated form of the distance PDF of interfering BSs. To tackle this issue, we opt for an approximation approach to obtain a more tractable expression, as presented in Theorem \ref{ASEsensingExpression}.

\begin{theorem}\label{ASEsensingExpression}
	When $Q \ge 2$, the sensing ASE can be given by
	\begin{equation}
		T^{\rm{ASE}}_s = \int_0^\infty  {\frac{{1 - {{{\left( {1 + \xi \Delta T M_{\mathrm{r}}z} \right)}^{ - K}}}}}{z}} \tilde I_{\mathrm{S}} dz,
	\end{equation}
	where 
	$\tilde I_{\mathrm{S}} \!=\! \int_0^\infty \! \int_0^\infty \!\exp \bigg(\!  - \!\pi \lambda_b \bigg( r_Q^2\left(\! {{{{{\left(\! {1 \!+\! z{R^{2\alpha }}r_Q^{ \!- \beta }} \right)}^{-K}}}} \!-\! 1} \!\right) \\ + \! K{z^{\frac{2}{\beta }}}{R^{\frac{{4\alpha }}{\beta }}} \! B \!\bigg(\! {\frac{{z{R^{2\alpha }}r_Q^{ - \beta }}}{{z{R^{2\alpha }}r_Q^{ - \beta }  \!+ 1}},1 \!- \frac{2}{\beta },K  \!+ \! \frac{2}{\beta }} \!\bigg) \! \bigg)\! \bigg)\! {f_{{r_{q}}}}\left( r \right)f_R(r) dRdr_q$.
\end{theorem}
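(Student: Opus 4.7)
The plan is to mirror the strategy used for Proposition \ref{LaplaceTransformSensing} while introducing a second stochastic radius $r_Q$ to capture the enlarged interference hole induced by cooperative nulling of $Q-1$ extra BSs. Starting from Hamdi's identity applied exactly as in (\ref{RadarRateExpression}), the problem reduces to evaluating the Laplace transform $\mathcal{L}_{I_{\mathrm{S}}}(z)$ of the residual inter-cluster interference and then taking expectation over the relevant BS distances before wrapping everything in the outer $z$-integral and multiplying by $\lambda_b J$ per (\ref{ASEsensing}).

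The second step is to condition on both the serving-BS distance $R=\|{\bf x}_1\|$ from the target and the cooperation-boundary distance $r_Q$, namely the distance from the serving BS to the $(Q-1)$-st nearest BS it cooperates with. By Slivnyak's theorem this $r_Q$ is independent of $R$, which is what justifies the product density $f_{r_q}(\cdot)f_R(\cdot)$ in the claim. Conditional on $(R,r_Q)$, the interfering BSs form a PPP of intensity $\lambda_b$ on $\mathbb{R}^2\setminus H$, where $H$ is the union of ${\cal O}({\bf 0},R)$, empty by the definition of the serving BS, and ${\cal O}({\bf x}_1,r_Q)$, which contains exactly the nulled cooperators. Using the PGFL of the PPP together with $h_{q,1}\sim\Gamma(K,1)$, which gives $\mathrm{E}[e^{-z h u^{-\alpha}R^{2\beta}}]=(1+zR^{2\beta}u^{-\alpha})^{-K}$, yields
$$\mathcal{L}_{I_{\mathrm{S}}}(z\,|\,R,r_Q)=\exp\!\left(-\lambda_b\!\int_{\mathbb{R}^2\setminus H}\!\!\Big(1-(1+zR^{2\beta}\|{\bf u}\|^{-\alpha})^{-K}\Big)\,d{\bf u}\right).$$

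The third step, and principal obstacle, is handling the geometry of $H$. Unlike the $Q=1$ case, where a single hole already forced the chord-overlap $\arccos$ integrand in Proposition \ref{LaplaceTransformSensing}, the union of two off-center disks admits no closed-form radial decomposition. To retain tractability I would approximate $H$ by the $x_1$-centered disk of radius $r_Q$ alone, which dominates once $r_Q$ is comparable to or larger than $R$, i.e., exactly the regime in which cooperative nulling is meaningful. Under this approximation the complement integral becomes radially symmetric about ${\bf x}_1$ and splits as ``whole plane minus disk of radius $r_Q$''. The missing disk contributes the term $r_Q^2\big((1+z R^{2\beta} r_Q^{-\alpha})^{-K}-1\big)$, while the whole-plane piece, after the substitution $v=z R^{2\beta} u^{-\alpha}$, produces the incomplete Beta term with argument $\tfrac{zR^{2\beta}r_Q^{-\alpha}}{zR^{2\beta}r_Q^{-\alpha}+1}$ and parameters $1-\tfrac{2}{\beta}$, $K+\tfrac{2}{\beta}$, reproducing the structure of the exponent in $\tilde I_{\mathrm{S}}$ up to the mechanical bookkeeping of powers of $R$.

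Finally, I would decondition on $(R,r_Q)$ using $f_R(R)=2\pi\lambda_b R e^{-\pi\lambda_b R^2}$ and the $Q$-th nearest-neighbour density $f_{r_q}$ of the PPP, a generalized Gamma of shape $Q$, absorb the $\lambda_b J$ prefactor from (\ref{ASEsensing}) into the sensing ASE, and collect the $z$-integral outside to match the statement. Reconciling the precise placement of $\alpha$ and $\beta$ in the final exponent with the theorem is a mechanical consequence of the chosen substitution and of the convention used to define the incomplete Beta function. The substantive content of the proof is the geometric approximation that replaces the double-disk hole by a single disk around the serving BS, and verifying that this approximation remains tight across the cooperative regime $Q\ge 2$ is what I would scrutinize most carefully.
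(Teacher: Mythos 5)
Your proposal is correct and follows essentially the same route as the paper's Appendix C: Hamdi's identity, conditioning on $(R,r_Q)$, a PGFL computation over the plane minus the disk of radius $r_Q$ around the serving BS, and deconditioning with $f_{r_q}$ and $f_R$. The single point you defer --- justifying the replacement of the double-disk exclusion region by the $\mathbf{x}_1$-centered disk alone --- is exactly what the paper's Lemma~\ref{IgnorableHole} supplies, via the CCDF $P[\tfrac{r_Q}{2R}\ge x]=1-(1-\tfrac{1}{4x^2})^{Q-2}\to 1$, showing the target-centered hole lies inside the already-excluded region with high probability (though the paper itself only establishes this for $Q\gg 1$ while stating the theorem for $Q\ge 2$).
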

\begin{proof}
	Please refer to Appendix C in \cite{Myproof}.
\end{proof}

\section{ISAC Network Performance Optimization}
In this section, our focus is on optimizing cooperative ISAC networks to strike an effective balance between S\&C. Building upon the results presented in Sections \ref{CommunicationPerformance} and \ref{SensingPerformance}, it is evident that the ASE for both S\&C depends on the numbers of served targets $J$ and users $K$, as well as the sizes of the cooperative BS clusters for S\&C, i.e., $Q$ and $J$. Without loss of generality, we define the performance region for C-S networks as follows:
\begin{equation}
	\begin{aligned}
			\mathcal{C}_{\mathrm{c}-\mathrm{s}}(K,L,J,Q) \triangleq & \big\{ (\hat r_c, \hat r_s): \hat r_c \leq  T^{\rm{ASE}}_c,  \hat r_s \leq  T^{\rm{ASE}}_s,\\
		&   KL +J(Q-1) + 1 \le M_{\mathrm{t}}, J \le J_{\max} \big\},
	\end{aligned}
\end{equation}
where $(\hat r_c, \hat r_s)$ represents a achievable S\&C performance pair. Additionally, an inner boundary for the Communication-Sensing (C-S) performance region depicted in Fig.~\ref{figure11} can be achieved by employing a simple time-sharing strategy based on the two corner points, namely, $(\hat r_c, r_s^{\max})$ and $(r_c^{\max}, \hat r_s)$. To achieve optimal communication performance at corner point $(r_c^{\max}, \hat r_s)$, it is essential to set $Q$ to 1. For realizing optimal sensing ASE at another corner point $(\hat r_c, r_s^{\max})$, the optimal values for $J^*$ and $Q^*$ can be determined through a 2D search while keeping $L$ fixed at 1.
Exploiting the observed monotonic trends in sensing and communication performance with respect to $J(Q-1)$, we can establish the boundary of the C-S region using a binary search over $J(Q-1)$. For each specified $J(Q-1)$, the optimal values for $T^{\rm{ASE}}_c$ and $T^{\rm{ASE}}_s$ can be respectively determined through a 2D search.

\begin{figure}[t]
	\centering
	\includegraphics[width=8.3cm]{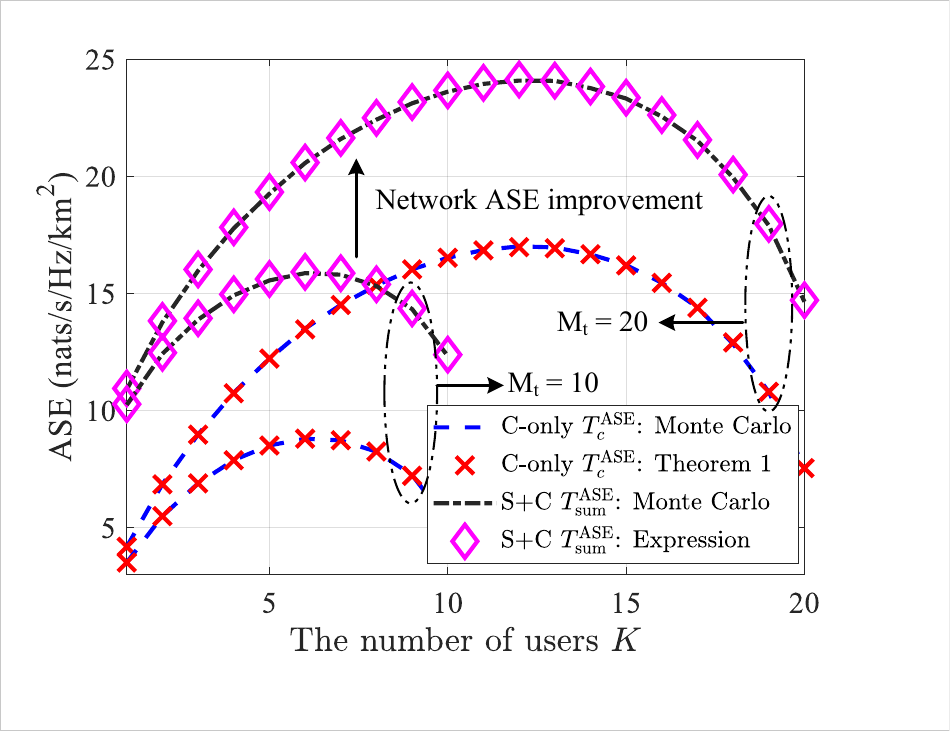}
	\caption{Communication ASE $T^{\rm{ASE}}_c$ and S\&C ASE $T^{\rm{ASE}}_{\rm{sum}}$ with respect to $K$.}
	\label{figure5a}
\end{figure}

\begin{figure}[t]
	\centering
	\includegraphics[width=8.3cm]{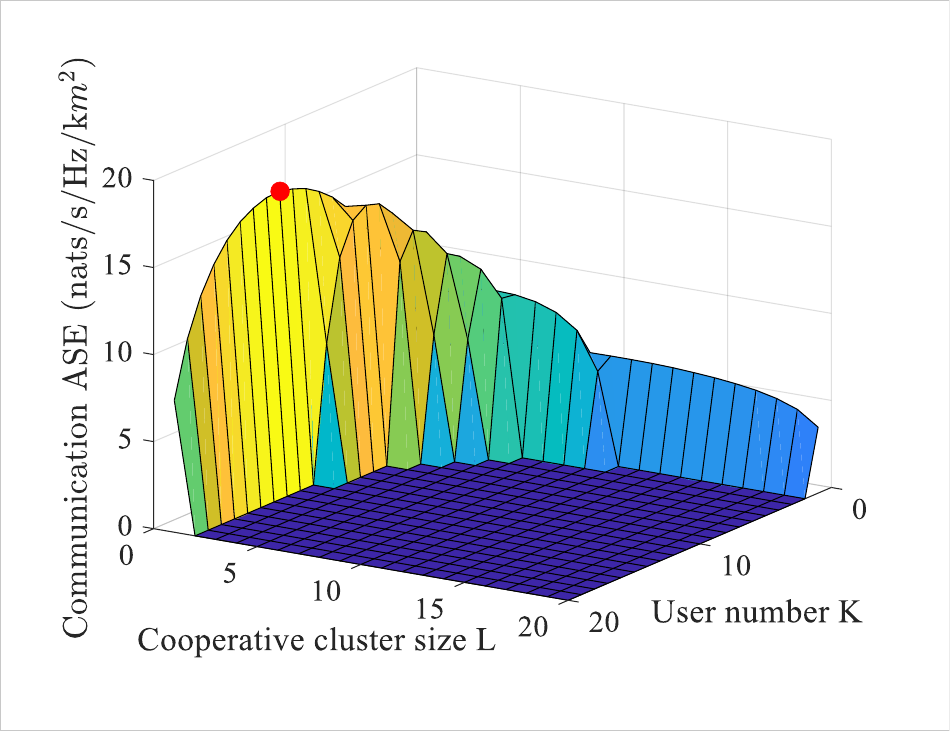}
	\caption{Optimal cooperative cluster size $L$ of the maximized communication ASE.}
	\label{figure6}
\end{figure}

\begin{figure}[t]
	\centering
	\includegraphics[width=8.3cm]{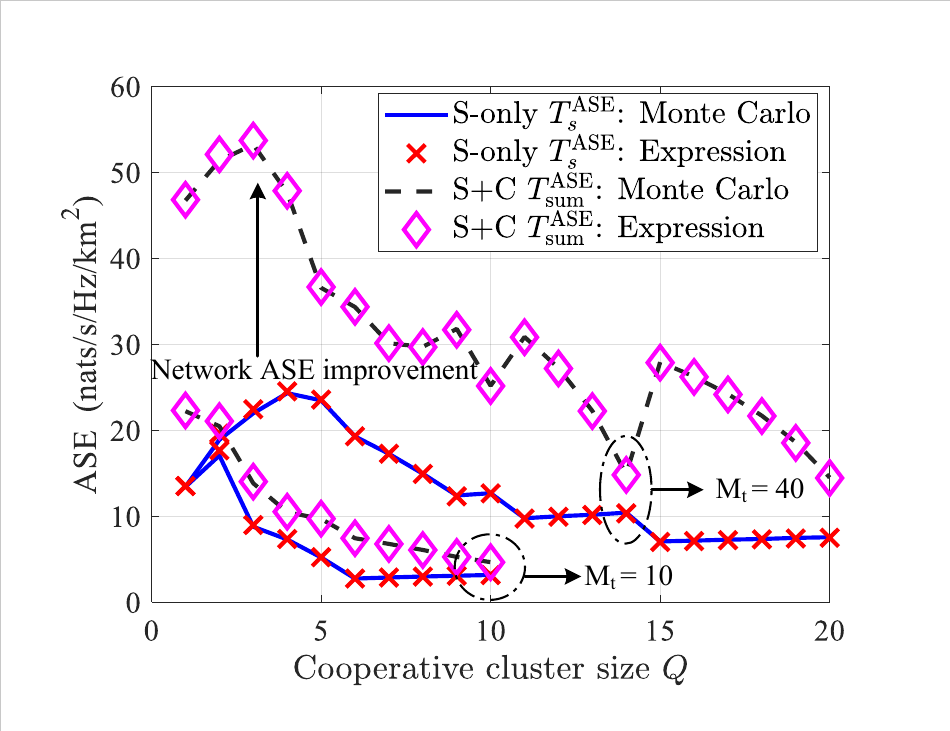}
	\caption{Sensing ASE $T^{\rm{ASE}}_s$ and S\&C ASE $T^{\rm{ASE}}_{\rm{sum}}$ comparisons versus cooperative  cluster size $Q$.}
	\label{figure7}
\end{figure}

\section{Simulations and Results}
\label{SimulationsSection}
Using numerical simulations, the fundamental insights of ISAC networks and the accuracy of the derived tractable expressions are analyzed by comparing them with Monte Carlo simulation results. The system parameters are as follows: Transmit antenna number $M_{\mathrm{t}} = 20$, receive antenna number $M_{\mathrm{r}} = 10$, transmit power $P_{\mathrm{t}} = 1$W at each BS, the RCS $\xi = 0.1$, matching filter gain $\Delta T = 1$, BS density $\lambda_b = 1/km^2$, $J_{\max} = 10$, pathloss coefficients $\alpha = 4$, and $\beta = 2$.

In Fig.~\ref{figure5a}, the tractable expression derived from Theorems \ref{CommunicationTightExpression} provides an exceptionally precise approximation that is consistent with actual results across all user number cases, with $L = 1$. The ASE of the ISAC networks, denoted as $T^{\rm{ASE}}_{\rm{sum}} = T^{\rm{ASE}}_c + T^{\rm{ASE}}_s$, significantly surpasses the communication ASE $T^{\rm{ASE}}_c$. The enhancement in spectrum efficiency can be attributed to the integration of sensing data analysis into traditional communication networks. Moreover, when maximizing the communication ASE, it becomes apparent that the optimal ratio between the number of users and the number of BS antennas is approximately 60\%. To elucidate the optimal spatial resource allocation for communication ASE maximization, we compare $T^{\rm{ASE}}_c$ across various values of $L$ and $K$ in Fig.~\ref{figure6}. Notably, it is observed that at the optimal communication ASE, the ideal configuration is $K = 12$ and $L = 1$. This implies that interference nulling is not necessary for communication ASE maximization. The rationale behind this is that dedicating more spatial DoF to interference suppression unavoidably leads to a reduction in multiplexing and diversity gains, ultimately resulting in an overall network performance decline.

In Fig.~\ref{figure7}, we validate the accuracy of the derived expression for sensing ASE ($T^{\rm{ASE}}_s$) with $K = 1$ and $L = 1$. It is observed that the sensing ASE initially exhibits an upward trend, followed by a subsequent decline as $Q$ increases. To maximize the sensing ASE, it is essential to acknowledge that though the sensed targets $J$ may decrease, interference nulling for sensing with a proper cooperative cluster size can effectively improve the sensing ASE due to severe inter-cell sensing interference. This enhancement is expected because interference does not suffer from the round-trip pathloss as echo signals, and the distance from interfering BSs to the target might be closer than that from the target to the serving BS.
Notably, when $M_{\mathrm{r}} = 40$, our proposed cooperative scheme can achieve up to double the ASE compared to scenarios without interference nulling ($Q=1$). This increase is attributed to the greater number of DoF available for interference nulling with a larger transmit antenna number. It is important to emphasize that the optimal total ASE performance does not necessarily increase with an increasing value of $Q$ especially when the number of transmit antennas is relatively small. This is primarily because as $Q$ increases, the communication ASE ($T^{\rm{ASE}}_c$) experiences a substantial decline, which may not be offset by the performance gains in $T^{\rm{ASE}}_s$. Furthermore, as $J(Q-1)$ must be an integer, increasing the cooperative cluster size may necessitate an appropriate reduction in $J$ to meet the spatial resource's DoF constraints, leading to the fluctuation of sensing ASE. 

\begin{figure}[t]
	\centering
	\includegraphics[width=8.3cm]{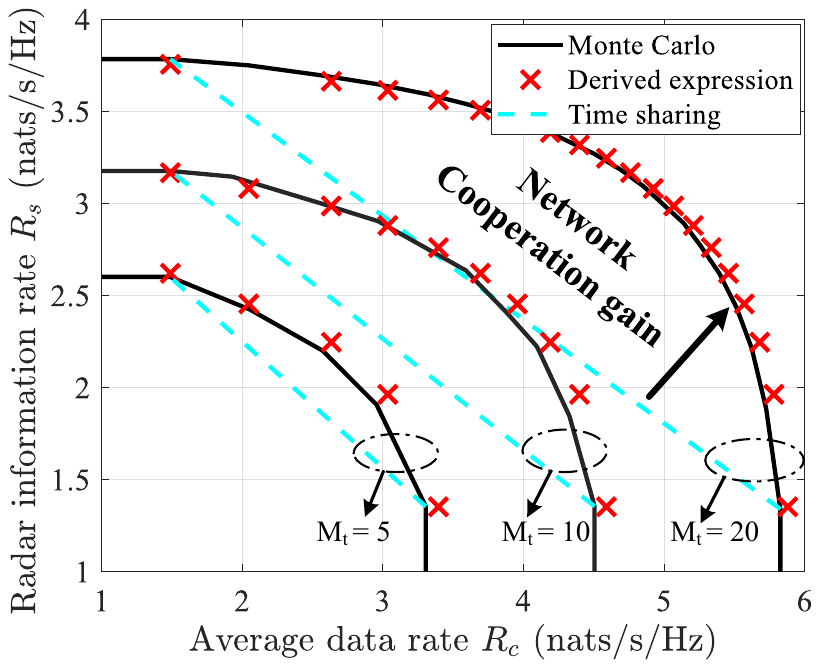}
	\vspace{-3mm}
	\caption{Average user/target's spectral efficiency tradeoff between S\&C.}
	\label{figure9}
\end{figure}

The tradeoff profile between the average data rate $R_c$ and the average radar information rate $R_s$ is presented in Fig.~\ref{figure9}, with $J_{\max} = 5$. The performance boundaries for S\&C expand significantly with an increasing number of transmit antennas. Notably, in Fig.~\ref{figure9}, it becomes evident that the $(R_c, R_s)$ region for the optimal cooperative scheme surpasses the corresponding region for the time-sharing scheme as the number of transmit antennas increases. Expanding on this comparison from individual rates to network ASE, Fig.~\ref{figure11} shows the effectiveness of the optimal cooperative strategy in enhancing network ASE performance. More specifically, the proposed cooperative scheme can enhance communication performance by up to 48\% and 33\% when compared to the time-sharing scheme for scenarios with $M_{\mathrm{t}} = 40$ and $M_{\mathrm{t}} = 30$, respectively. Additionally, similar to the average S\&C performance illustrated in Fig.~\ref{figure9}, the C-S ASE region for the proposed cooperative ISAC scheme undergoes a substantial expansion compared to the time-sharing scheme as the number of transmit antennas increases. 

\begin{figure}[t]
	\centering
	\includegraphics[width=8.3cm]{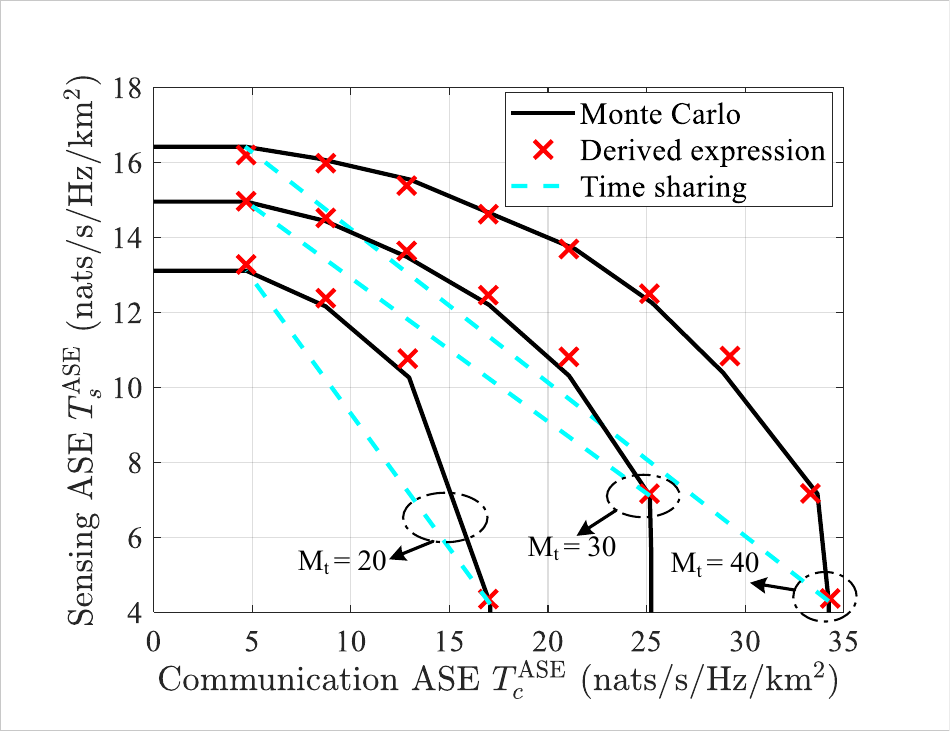}
	\vspace{-3mm}
	\caption{Area spectral efficiency tradeoff between S\&C versus different transmit antenna number $M_t$.}
	\label{figure11}
\end{figure}

\vspace{-1mm}
\section{Conclusions}
\vspace{-1mm}
In this paper, we introduced a novel cooperative scheme for ISAC networks with the coordinated beamforming technique. Leveraging SG tools, we derived a tractable expression of the S\&C ASE performance. Additionally, we tackle the profile optimization problem to enhance the performance of ISAC networks. By comparing its effectiveness against the time-sharing scheme, it is verified that the optimal allocation of spatial resources in ISAC networks significantly enhances the cooperative gain at the network level. The simulation results demonstrate the benefits of the proposed cooperative ISAC scheme and provide insightful guidelines for designing practical large-scale ISAC networks.

\vspace{-1mm}
\footnotesize  	
\bibliography{mybibfile}

% Generated by IEEEtran.bst, version: 1.14 (2015/08/26)
\begin{thebibliography}{10}
\providecommand{\url}[1]{#1}
\csname url@samestyle\endcsname
\providecommand{\newblock}{\relax}
\providecommand{\bibinfo}[2]{#2}
\providecommand{\BIBentrySTDinterwordspacing}{\spaceskip=0pt\relax}
\providecommand{\BIBentryALTinterwordstretchfactor}{4}
\providecommand{\BIBentryALTinterwordspacing}{\spaceskip=\fontdimen2\font plus
\BIBentryALTinterwordstretchfactor\fontdimen3\font minus
  \fontdimen4\font\relax}
\providecommand{\BIBforeignlanguage}[2]{{%
\expandafter\ifx\csname l@#1\endcsname\relax
\typeout{** WARNING: IEEEtran.bst: No hyphenation pattern has been}%
\typeout{** loaded for the language `#1'. Using the pattern for}%
\typeout{** the default language instead.}%
\else
\language=\csname l@#1\endcsname
\fi
#2}}
\providecommand{\BIBdecl}{\relax}
\BIBdecl

\bibitem{Mishra2019Toward}
K.~V. Mishra \emph{et~al.}, ``Toward millimeter-wave joint radar
  communications: A signal processing perspective,'' \emph{IEEE Signal Proces.
  Mag.}, vol.~36, no.~5, pp. 100--114, Sep. 2019.

\bibitem{Liu2022Integrated}
F.~Liu, Y.~Cui, C.~Masouros, J.~Xu, T.~X. Han, Y.~C. Eldar, and S.~Buzzi,
  ``Integrated sensing and communications: Towards dual-functional wireless
  networks for {6G} and beyond,'' \emph{IEEE J. Sel. Areas Commun.}, vol.~40,
  no.~6, pp. 1728--1767, Jun. 2022.

\bibitem{Meng2023Throughput}
K.~Meng \emph{et~al.}, ``Throughput maximization for {UAV}-enabled integrated
  periodic sensing and communication,'' \emph{IEEE Trans. Wireless Commun.},
  vol.~22, no.~1, pp. 671--687, Jan. 2023.

\bibitem{Shin2017CoordinatedBeamforming}
W.~Shin \emph{et~al.}, ``Coordinated beamforming for multi-cell {MIMO-NOMA},''
  \emph{IEEE Commun. Lett.}, vol.~21, no.~1, pp. 84--87, Jan. 2017.

\bibitem{Andrews2011TractableApproach}
J.~G. Andrews, F.~Baccelli, and R.~K. Ganti, ``A tractable approach to coverage
  and rate in cellular networks,'' \emph{IEEE Trans. Commun.}, vol.~59, no.~11,
  pp. 3122--3134, Nov. 2011.

\bibitem{al2017stochastic}
A.~Al-Hourani \emph{et~al.}, ``Stochastic geometry methods for modeling
  automotive radar interference,'' \emph{IEEE Trans. Intell. Transp. Syst.},
  vol.~19, no.~2, pp. 333--344, Feb. 2017.

\bibitem{Olson2022RethinkingCells}
J.~G.~A. Olson, Nicholas~R. and R.~W.~H. Jr., ``Coverage and capacity of joint
  communication and sensing in wireless networks,'' \emph{arXiv preprint
  arXiv:2210.02289}, 2022.

\bibitem{Marco2016Stochastic}
M.~Di~Renzo and P.~Guan, ``Stochastic geometry modeling and system-level
  analysis of uplink heterogeneous cellular networks with multi-antenna base
  stations,'' \emph{IEEE Trans. Commun.}, vol.~64, no.~6, pp. 2453--2476, Jun.
  2016.

\bibitem{Bjrnson2016Deploying}
E.~Björnson \emph{et~al.}, ``Deploying dense networks for maximal energy
  efficiency: Small cells meet massive {MIMO},'' \emph{IEEE J. Sel. Areas
  Commun.}, vol.~34, no.~4, pp. 832--847, Apr. 2016.

\bibitem{Park2016OptimalFeedback}
J.~Park \emph{et~al.}, ``On the optimal feedback rate in interference-limited
  multi-antenna cellular systems,'' \emph{IEEE Trans. Wireless Commun.},
  vol.~15, no.~8, pp. 5748--5762, Aug. 2016.

\bibitem{Stoica1990Maximum}
P.~Stoica and K.~Sharman, ``Maximum likelihood methods for direction-of-arrival
  estimation,'' \emph{IEEE Trans. Acoust., Speech, and Signal Processing},
  vol.~38, no.~7, pp. 1132--1143, Jul. 1990.

\bibitem{Yang2007MIMO}
Y.~Yang and R.~S. Blum, ``{MIMO} radar waveform design based on mutual
  information and minimum mean-square error estimation,'' \emph{{IEEE} Trans.
  Aerosp. Electron. Syst.}, vol.~43, no.~1, pp. 330--343, Jan. 2007.

\bibitem{hamdi2010useful}
K.~A. Hamdi, ``A useful lemma for capacity analysis of fading interference
  channels,'' \emph{IEEE Trans. Commun.}, vol.~58, no.~2, pp. 411--416, Feb.
  2010.

\bibitem{Hosseini2016Stochastic}
K.~Hosseini, W.~Yu, and R.~S. Adve, ``A stochastic analysis of network {MIMO}
  systems,'' \emph{IEEE Trans. Signal Process.}, vol.~64, no.~16, pp.
  4113--4126, Aug. 2016.

\bibitem{Myproof}
\url{https://1drv.ms/b/s!AnoE-xQ563dCiXqEqRs66rNITIN2?e=Wss9an}.

\bibitem{Zhang2014StochasticGeometry}
X.~Zhang and M.~Haenggi, ``A stochastic geometry analysis of inter-cell
  interference coordination and intra-cell diversity,'' \emph{IEEE Trans.
  Wireless Commun.}, vol.~13, no.~12, pp. 6655--6669, Dec. 2014.

\bibitem{mukherjee2014analytical}
S.~Mukherjee, \emph{Analytical modeling of heterogeneous cellular
  networks}.\hskip 1em plus 0.5em minus 0.4em\relax Cambridge University Press,
  2014.

\end{thebibliography}
\bibliographystyle{IEEEtran}

\normalsize	
\section*{Appendix A: \textsc{Proof of Theorem \ref{CommunicationTightExpression}}}
The interference term with a given distance $r$ from the typical user to the serving BS, can be derived by utilizing Laplace transform. For ease of analysis, we introduce a geometric parameter $\eta_L = \frac{\left\| {{{\bf{x}}_1}} \right\|}{\left\| {{{\bf{x}}_{L}}} \right\|} $, defined as the distance to the closest BS normalized by the distance to the furthest BS in the cluster for typical user. When $\left\| {{{\bf{x}}_1}} \right\| = r$ and $\left\| {{{\bf{x}}_L}} \right\| = r_L$, we have
\vspace{-1.5mm}
\begin{equation}\label{CommunicationEquationExpression}
	\begin{aligned}
		&{{\cal L}_{{I_{\rm{C}}}}}(z) = {\rm{E}}_{g, \Phi_b}\left[ {\exp \left( { - z{{\left\| {{{\bf{x}}_1}} \right\|}^\alpha }\sum\nolimits_{i = 2}^\infty  {{{\left\| {{{\bf{x}}_i}} \right\|}^{ - \alpha }}} {{| {{\bf{h}}_{k,i}^H{{{\bf{F}}}_i}} |}^2}} \right)} \right]  \\
		\overset{(a)}{=}&  {\rm{E}}_{\Phi_b}\left[ \left( \prod _{{{{\bf{x}}_i}} \in \Phi_b \textbackslash {\cal{O}}(0,r)} {   {  {{{{\left( {1 + z{r^\alpha }{{\left\| {{{\bf{x}}_i}} \right\|}^{ - \alpha }}} \right)}^{-K}}}} } dx} \right) \bigg| r, r_L \right] \\
		\overset{(b)}{=}& \exp \left( { - 2\pi \lambda_b \int_{{r_L}}^\infty  {\left( {1 - {{{{\left( {1 + z{r^\alpha }{x^{ - \alpha }}} \right)}^{-K}}}}} \right)} xdx} \right) \\
		\overset{(c)}{=}& \exp \bigg(  - \pi \lambda_b {r^2}{\rm{H}}\left( {z,K,\alpha ,\eta_{L} } \right) \bigg) ,
		\vspace{-1.5mm}
	\end{aligned}
\end{equation}
where ${\rm{H}}\left( {z,K,\alpha ,\eta_L } \right) = K{z^{\frac{2}{\alpha }}}B\left( {\frac{z}{{z + {\eta_L ^{ - \alpha }}}},1 - \frac{2}{\alpha },K + \frac{2}{\alpha }} \right)+ \frac{1}{{{\eta_L ^2}}}\left( {\frac{1}{{{{\left( {1 + z{\eta_L ^\alpha }} \right)}^K}}} - 1} \right) $.
In (\ref{CommunicationEquationExpression}), ($a$) follows from the fact that the small-scale channel fading is independent of the BS locations and that the interference power imposed by each interfering BS at user $k$ is distributed as $\Gamma(K,1)$. Then, relation ($a$) is then obtained using the moment generating function (MGF) of this Gamma distribution. To derive ($b$), we use the probability generating functional (PGFL) of a PPP with density $\lambda_b$. ($c$) comes from the variable $y = x^2$ and $\eta_{L} = \frac{r}{r_L}$.

Then, we perform the marginal probability integral over $r$. By plugging (\ref{CommunicationEquationExpression}) into the following equation, where ${f_r}\left( r \right) = 2\pi {\lambda _b}r{e^{ - \pi {\lambda _b}{r^2}}}$, it follows that
\vspace{-1.5mm}
\begin{equation}\label{LapaceformCOmmunication}
	\begin{aligned}
		{{\cal L}_{{I_{\rm{C}}}}}(z) = & \int_0^\infty  {\exp \left( { - \pi \lambda {r^2}{\rm{H}}\left( {z,K,\alpha ,\eta_{L} } \right)} \right)} f(r)dr \\
		= & \frac{1}{{{\rm{H}}\left( {z,K,\alpha ,\eta_{L} } \right) + 1}}.
		\vspace{-1.5mm}
	\end{aligned}
\end{equation}
According to Lemma 3 in \cite{Zhang2014StochasticGeometry}, the PDF of the distance ratio $\eta_L$ can be given by 
\begin{equation}\label{RatioEquation}
	{f_{\eta_L}}\left( x \right) = 2\left( {L - 1} \right)x{\left( {1 - {x^2}} \right)^{L - 2}}.
\end{equation}
Then, we perform the marginal probability integral over $\eta_L$. By plugging (\ref{LapaceformCOmmunication}) and (\ref{RatioEquation}) into (\ref{ASEcommunication}), the achievable rate can be denoted by
\vspace{-1.5mm}
\begin{equation}
	\begin{aligned}
		R_c =& \int_0^\infty  {\frac{{1 - {{\left( {1 + z} \right)}^{ - {M_{\mathrm{t}} - KL - J(Q-1) + 1}}}}}{z}} \\
		& \int_0^1 {\frac{{2\left( {L - 1} \right)\eta_L {{\left( {1 - {\eta_L ^2}} \right)}^{L - 2}}}}{ {\rm{H}}\left( {z,K,\alpha ,\eta_L } \right) + 1 }d \eta_L }dz.
		\vspace{-1.5mm}
	\end{aligned}
\end{equation}
This thus completes the proof.

\section*{Appendix B: \textsc{Proof of Proposition \ref{LaplaceTransformSensing}}}

As shown in Fig. \ref{figure3}, to accurately derive the PDF for the distance, we must determine the count of points distributed outside the hole within the strip. The strip's area can be represented as $2 x \mathrm{~d} x(\pi-\varphi(x))$, with $\varphi(x) = \arccos \left(\frac{x}{2R}\right)$ denoting the angle within the hole, as depicted in Fig. \ref{figure3}. Consequently, the Poisson distribution models the number of interfering points within this strip, with a mean of $\lambda_b 2 x \mathrm{~d} x(\pi-\varphi(x))$.  Then, can we derive a tight expression on the Laplace transform of sensing interference as follows:
\begin{equation}\label{HoleExpression}
	\begin{aligned}
		&	{{\cal L}_{{I_{\rm{S}}}}}(z)  \mathop  = \limits^{\frac{x}{R} = t}  \exp \bigg( { - \pi \lambda_b K{z^{\frac{2}{\beta }}}{R^{\frac{{4\alpha }}{\beta }}}B\left( {1,1 - \frac{2}{\beta },K + \frac{2}{\beta }} \right)} \\
		&+  {\lambda_b {R^2}\int_0^2 {2\arccos  {\frac{t}{2}}\bigg( {1 - \frac{1}{{{{\left( {1 + z{R^{2\alpha  - \beta }}{t^{ - \beta }}} \right)}^K}}}} \bigg)} tdt} \bigg).
	\end{aligned}
\end{equation}
In (\ref{HoleExpression}), the second part is to subtract the interference in the hole.
By calculating the probability integral of the target distance $R$, we have
\begin{equation}\label{LapalaceTransform}
	\begin{aligned}
		&{{\cal L}_{{I_{\rm{S}}}}}(z) = \\
		&\int_0^\infty \! \exp \bigg( \! - R \bigg( K{z^{\frac{2}{\beta }}}{{\left( {\frac{R}{{\pi \lambda_b }}} \right)}^{\frac{{2\alpha }}{\beta } - 1}} \! B\left( \! {1,1 - \frac{2}{\beta },K + \frac{2}{\beta }} \right) \!+\! 1 \\
		&-\! \int_0^2 \! {\frac{2}{\pi }\arccos  {\frac{t}{2}} \bigg( {1 \! - {{{{( {1 + z{{( {\frac{R}{{\pi \lambda_b }}} )}^{\alpha  - \beta /2}}{t^{ - \beta }}} )}^{-K}}}}} \bigg)} tdt \bigg) \bigg) dR.
	\end{aligned}
\end{equation}
Then, by plugging (\ref{LapalaceTransform}) into (\ref{RadarRateExpression}), it follows that
\begin{equation}
	R_s = \int_0^\infty  {\frac{{1 - {{{\left( {1 + \xi \Delta T M_{\mathrm{r}}z} \right)}^{ - K}}}}}{z}} \int_0^\infty {{\cal L}_{{I_{\rm{S}}}}}(z) f(R) dR dz.
\end{equation}
This completes the proof.

\vspace{-1.5mm}
\section*{Appendix C: \textsc{Proof of Theorem \ref{ASEsensingExpression}}}
First, we draw the following conclusion to facilitate the derivation of sensing ASE expression.
\begin{thm}\label{IgnorableHole}
	When $Q \gg 1$, the term $\exp \left( {\lambda_b {R^2}\int_{{r_Q}/{R}}^2 {2\arccos \left( {\frac{t}{2}} \right)\left( {1 - \frac{1}{{{{\left( {1 + z{R^{2\alpha  - \beta }}{t^{ - \beta }}} \right)}^K}}}} \right)} tdt} \right) \to 0$ in (\ref{HoleExpression}).
\end{thm}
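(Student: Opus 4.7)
The plan is to exploit the asymptotic growth of $r_Q$, the distance from the target to its $Q$-th nearest BS, using standard order-statistic properties of the PPP $\Phi_b$. In particular, $r_Q$ has density $f_{r_Q}(r) = 2(\pi\lambda_b)^Q r^{2Q-1} e^{-\pi\lambda_b r^2}/(Q-1)!$ with mean $\Theta(\sqrt{Q/\lambda_b})$, whereas the serving-BS distance $R$ is Rayleigh with mean $\Theta(1/\sqrt{\lambda_b})$. Consequently, for $Q \gg 1$ the ratio $r_Q/R$ concentrates on the scale $\sqrt{Q}$, and in particular $\mathbb{P}(r_Q/R \geq 2) \to 1$.

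First I would show that on the event $\{r_Q/R \geq 2\}$ the integration domain $[r_Q/R, 2]$ collapses to the empty set, making the interior integral exactly $0$. On the complementary event $\{r_Q/R < 2\}$, I would use the uniform bounds $0 \le \arccos(t/2) \le \pi/2$ and $0 \le 1 - (1+zR^{2\alpha-\beta}t^{-\beta})^{-K} \le 1$ to bound the integrand on $[0,2]$ by $\pi t$, which gives the envelope $\lambda_b R^2 \int_{r_Q/R}^{2} \pi t\,dt = \tfrac{\pi}{2}\lambda_b R^2 (4 - (r_Q/R)^2)_+$, shrinking to zero as $r_Q/R \uparrow 2$. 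Combining these two pieces, the quantity inside $\exp(\cdot)$ converges to $0$ in probability as $Q \to \infty$, and then passing the limit through the expectation (via dominated convergence, using the joint law of $(R, r_Q)$ on $\{0 < R \le r_Q\}$ as the dependent order statistics of $\Phi_b$) yields the claimed limiting behavior of the exponential factor inside (\ref{HoleExpression}).

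The main obstacle is that, taken literally, the lemma asserts $\exp(\cdot) \to 0$, while the integrand is pointwise nonnegative on $[0,2]$, forcing the exponential to be bounded below by $1$. My reading of the statement is that the hole-correction \emph{contribution} in the log-Laplace representation of (\ref{HoleExpression}) vanishes — equivalently, this multiplicative factor becomes asymptotically inert — which is exactly what is needed to drop the hole term when passing to the tractable form in Theorem \ref{ASEsensingExpression}. The delicate point in making this rigorous is the joint randomness of $R$ and $r_Q$: because they are correlated order statistics of the same PPP, the envelope bound above must be integrated against their joint density (proportional to $R(r_Q^2 - R^2)^{Q-2} r_Q e^{-\pi\lambda_b r_Q^2}$ on $\{0<R\le r_Q\}$), and I would exploit the strong concentration of $(r_Q^2 - R^2)/r_Q^2$ near $1$ for $Q \gg 1$ to conclude the envelope expectation also tends to $0$.
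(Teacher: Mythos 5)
Your proposal is correct and takes essentially the same route as the paper's own proof: both arguments reduce the claim to showing $\mathbb{P}(r_Q \ge 2R) \to 1$ as $Q \to \infty$, so that the lower limit $r_Q/R$ of the hole-correction integral exceeds the upper limit $2$ and the integral vanishes identically. The paper gets this probability estimate via Slivnyak's theorem plus the exact order-statistics ratio CCDF $\mathbb{P}\left[\tfrac{r_Q}{2R} \ge x\right] = 1-\left(1-\tfrac{1}{4x^2}\right)^{Q-2}$ cited from prior work, where you use concentration of the marginals of $r_Q$ and $R$; your treatment of the complementary event (envelope bound plus a limit-interchange argument) is extra care that the paper omits entirely. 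Two remarks. First, the interpretive issue you flag is real, and your resolution is the right one: the paper's own proof ends by asserting, exactly as the lemma literally states, that the exponential factor tends to $0$, which is impossible since the exponent is nonnegative; what the argument actually establishes is that the exponent tends to $0$, i.e., the factor tends to $1$ and becomes multiplicatively inert, which is precisely what is needed to drop the hole term in Theorem \ref{ASEsensingExpression}. Second, a caution on your dominated-convergence step: the natural candidate $\exp\!\big(2\pi\lambda_b R^2\big)$ obtained by exponentiating your envelope is \emph{not} integrable against the Rayleigh density of $R$, so it cannot serve as the dominating function; instead, dominate the full conditional Laplace transform by $1$ (the hole correction never exceeds the whole-plane interference term it is subtracted from), or more simply bound the difference between the corrected and uncorrected Laplace transforms by the indicator of the event $\{r_Q < 2R\}$, whose probability you have already shown vanishes.
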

\begin{proof}
	First, we ignore the interference in the hole (c.f. Fig.~\ref{figure3}). 
	In this case, it is worth noting that the PDF of the distance from the $(Q-1)$th closest BS to the serving BS is the same as that from the $Q$th BS to the origin. This can be proved by Slivnyak's theorem \cite{mukherjee2014analytical}, i.e., for a PPP, because of the independence between all of the points, conditioning on a point at $x$ does not change the distribution of the rest of the process. The distance from the $(Q-1)$th closest BS to the serving BS is denoted by $r_{Q}$. Therefore, we have the PDF of the distance $r_Q$ from the $(Q-1)$th closest BS to the serving BS can be given by ${f_{{r_{Q}}}}\left( r \right) = {e^{ - \lambda \pi {r^2}}}\frac{{2{{\left( {\lambda \pi {r^2}} \right)}^Q}}}{{r\Gamma \left( Q \right)}}$, and according to \cite{Zhang2014StochasticGeometry}, the complementary cumulative distribution function (CCDF) of $\frac{r_Q}{2R}$ can be expressed as $P\left[\frac{r_Q}{2R} \ge x\right] = 1 - \left(1 - \frac{1}{4x^2}\right)^{Q-2}$.
	It can be readily proved that when $Q \gg 1$,  $P[\frac{r_Q}{2R} \ge x] \to 1$ for any given $x > 1$, which represents the interference distance from the BS at ${\bf{x}}_Q$ to the serving BS is always larger than $2R$. Therefore, the probability of interference in the hole is almost zero due to cooperative interference nulling when $Q \gg 1$, i.e., $\frac{r_Q}{R} \ge 2$. This thus proves that $\exp \left( {\lambda_b {R^2}\int_{{r_Q}/{R}}^2 {2\arccos \left( {\frac{t}{2}} \right)\left( {1 - \frac{1}{{{{\left( {1 + z{R^{2\alpha  - \beta }}{t^{ - \beta }}} \right)}^K}}}} \right)} tdt} \right) \to 0$ when $Q \gg 1$.
\end{proof}

According to Lemma \ref{IgnorableHole}, the interference term in the hole can be ignored when $Q \gg 1$. Thus, to simplify the expression derivation, we remove the second term in (\ref{SensingRateExpression}) for the case with $Q \ge 2$, and then the conditional Laplace transform of sensing interference can be given by 
\vspace{-1.5mm}
\begin{equation}
	\begin{aligned}
		&\left[ {{e^{ - z I_{\rm{S}}}}} \big|R,r_Q \right] = \exp \bigg(  - \pi \lambda_b \bigg( r_Q^2\left( {{{{{\left( {1 + z{R^{2\alpha }}r_Q^{ - \beta }} \right)}^{-K}}}} - 1} \right) \\
		&+ K{z^{\frac{2}{\beta }}}{R^{\frac{{4\alpha }}{\beta }}}B\bigg( {\frac{{z{R^{2\alpha }}r_Q^{ - \beta }}}{{z{R^{2\alpha }}r_Q^{ - \beta } + 1}},1 - \frac{2}{\beta },K + \frac{2}{\beta }} \bigg) \bigg) \bigg).
		\vspace{-1.5mm}
	\end{aligned}
\end{equation}
Then, the Laplace transform of sensing interference is $\left[ {{e^{ - z I_{\rm{S}}}}} \right] = \int_0^\infty \int_0^\infty \left[ {{e^{ - z I_{\rm{S}}}}} \big|R,r_q \right] {f_{{r_{q}}}}\left( r \right)f_R(r) dRdr_q$. Finally, by plugging $\left[ {{e^{ - z I_{\rm{S}}}}} \right]$ into (\ref{RadarRateExpression}) and (\ref{ASEsensing}), the tractable expressions of the radar information rate and sensing ASE can be obtained.

\end{document}